\documentclass[a4paper,12pt]{article}
\usepackage{amsmath,amssymb,amsfonts,amsthm}
\newtheorem{theorem}{Theorem}[section]

\usepackage{amsmath,amssymb,amsfonts,amsthm}
\usepackage[utf8]{inputenc}

\usepackage{graphicx}
\usepackage{enumerate}
\usepackage{amsmath,amssymb,amsfonts,amsthm}

\usepackage{t1enc}
\usepackage{color}












\title{Comment on \textit{Horizon area- Angular momentum inequality for a class of axially symmetric black holes}}
\author{Mar\'ia E. Gabach Cl\'ement}
\begin{document}
\maketitle
\abstract{We extend the results presented by Aceña \textit{et al} in the afore mentioned paper, \cite{Acena11}, to the case of axisymmetric, maximal initial data which are invariant under an inversion transformation.}
\section{Introduction}

In \cite{Acena11} it is proven that for a large class of axisymmetric, vacuum and maximal initial data, with a non-negative
cosmological constant, having a surface
$\Sigma=\{r=constant\}$ where the following local conditions are
satisfied
\begin{align}
H  &=0,\label{cond1} \\ 
\partial_rH &\geq0, \label{cond2}\\
 \partial_rq &=0,\label{cond3}
\end{align}
then the following inequality holds  
\begin{equation}\label{desigualdad}
8\pi |J|\leq A,
\end{equation}
where $A$ is the area and $J$ the angular momentum of $\Sigma$.

We refere the reader to \cite{Acena11} for the detailed description of the axisymmetric, maximal initial data.

In this note, we prove that that the above result also applies to more general surfaces $\Sigma=\{r-Rc(\theta)=0\}$ if one  assumes that $\Sigma$ is an isometry surface, together with condition \eqref{cond2} for its mean curvature. 

This result is especially relevant for the case where there are multiple black holes, since in that situation one would expect that the surface of isometry is not one with constant radius.

\section{Main Result}

Consider an axially symmetric, maximal initial data set for Einstein's equations having the following induced 3-metric on a spatial hipersurface $S$ 
\begin{equation}\label{metric}
h=e^ {\sigma}\left[e^ {2q}(dr^ 2+r^2d\theta^ 2)+r^2\sin^2\theta(d\phi+v_r dr
  +v_\theta d\theta)^ 2\right],
\end{equation}
where $\sigma, q, v_r$ and $v_\theta$ are regular functions of $r$ and
$\theta$. Moreover, the data have angular momentum $J$ (see \cite{Acena11} for information concerning the extrinsic curvature of the data).

Let the 3-metric $h$ be invariant under the inversion transformation $r\to R^2c^2(\theta)/r$, where $c(\theta)$ is a smooth function of $\theta$ and let $\Sigma$ be a 2-surface that is fixed by such transformation. Then the isometry implies the following conditions
\begin{equation}\label{conditions}
H|_\Sigma=0,\qquad\partial_rq|_\Sigma=0,\qquad v_r|_\Sigma=0.
\end{equation}

The area of $\Sigma$ is given by
\begin{equation}
A=\int_0^ {2\pi}\int_0^\pi e^ {\sigma+q} R^ 2\sin\theta|c|\sqrt{(c^ 2+c'^ 2)}d\theta d\phi,
\end{equation}
and one can easily check that
\begin{equation}
A\geq\int_0^ {2\pi}\int_0^\pi e^ {\sigma+q} R^ 2c^2\sin\theta d\theta d\phi.
\end{equation}
We define the function $\varsigma$ in terms of $\sigma$ in the following way
\begin{equation}
\varsigma:=\sigma+2\ln r
\end{equation}
and obtain
\begin{equation}\label{a22}
A\geq2\pi\int_0^\pi e^ {\varsigma+q} \sin\theta d\theta.
\end{equation}
Then, using Einstein's constraints and the maximality condition, we can bound the right hand side of \eqref{a22}  with the mass functional $\mathcal M$ introduced in \cite{Acena11}
\begin{equation}\label{aa1}
A\geq4\pi e^{\frac{\mathcal M}{8}}e^{\frac{F+G}{16\pi}},
\end{equation}
where $\mathcal M, F, G$ are avaluated at $\Sigma$.

Then we obtain the following theorem

\begin{theorem}
Consider axisymmetric, vacuum and maximal initial data, with a non-negative
cosmological constant as described above. Assume there exists a surface
$\Sigma$ which is invariant under an inversion transformation and such that
\begin{equation}\label{derh}
\partial_rH \geq0
\end{equation}
holds on $\Sigma$. Then we have 
\begin{equation}\label{desigualdad}
8\pi |J|\leq A
\end{equation}
where $A$ is the area and $J$ the angular momentum of $\Sigma$.
\end{theorem}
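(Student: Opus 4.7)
The starting point is the already-derived inequality \eqref{aa1}, which bounds the area $A$ in terms of the mass functional $\mathcal{M}$ of \cite{Acena11} and the boundary quantities $F$ and $G$ evaluated on $\Sigma$. To obtain \eqref{desigualdad} it then suffices to establish two separate bounds: first, $F+G\geq 0$ on $\Sigma$, and second, the variational inequality $\mathcal{M}\geq 8\ln(2|J|)$. Substituting both into \eqref{aa1} yields $A\geq 4\pi\cdot 2|J|\cdot 1 = 8\pi|J|$, which is exactly \eqref{desigualdad}.

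For the first bound, I would inspect the explicit formulas for $F$ and $G$ given in \cite{Acena11}, which express them as boundary integrals on $\Sigma$ built out of the traces of $H$, $\partial_r H$, $\partial_r q$ and $v_r$. The isometry conditions \eqref{conditions} force $H$, $\partial_r q$ and $v_r$ to vanish pointwise on $\Sigma$, so every dependence on these quantities drops out, and the only surviving contribution is controlled in sign by \eqref{derh}. Because the offending quantities vanish pointwise, it is immaterial that $\Sigma$ is no longer a coordinate sphere $\{r=\mathrm{const}\}$ but a surface of the form $\{r=Rc(\theta)\}$; the argument from \cite{Acena11} carries over essentially verbatim.

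For the second bound, I would adapt the variational argument of \cite{Acena11}. The inversion invariance allows one to reflect the initial data across $\Sigma$ and obtain a complete axisymmetric, maximal, vacuum data set with two asymptotic regions, on which the Dain-type lower bound $\mathcal{M}\geq 8\ln(2|J|)$ applies; the minimum is attained by an extreme Kerr harmonic map at fixed angular momentum. The main obstacle is checking that the reflected metric has the regularity required for this variational framework: the conditions \eqref{conditions} are precisely those ensuring that the reflection glues together smoothly enough across $\Sigma$ and that the extended data lies in the weighted Sobolev class used in \cite{Acena11}. Once the matching across $\Sigma$ and the decay at the reflected end are verified, the two bounds combine as above to yield \eqref{desigualdad}.
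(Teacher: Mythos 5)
Your overall architecture coincides with the paper's: the proof in the text is essentially a one\--line reduction, namely that inequality \eqref{aa1} has already been derived for the inversion\--fixed surface $\Sigma=\{r=Rc(\theta)\}$, and that \cite{Acena11} shows its right\--hand side is bounded below by $8\pi|J|$ whenever $H=0$, $\partial_r q=0$ and $\partial_r H\geq 0$ hold on $\Sigma$ --- the first two being supplied by the isometry conditions \eqref{conditions}. Your splitting of the cited result into the boundary\--term estimate $F+G\geq 0$ and the variational bound $\mathcal M\geq 8\ln(2|J|)$ is exactly the decomposition used in \cite{Acena11}, and your treatment of $F+G$ (the terms involving $H$, $\partial_r q$, $v_r$ vanish pointwise by \eqref{conditions}, the remaining term has a sign by \eqref{derh}) is the right argument and the real content of this note.

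The one place where you deviate, and where your proposed route would run into trouble, is the second bound. The inequality $\mathcal M\geq 8\ln(2|J|)$ in \cite{Acena11} is a statement about a quasi\--local functional of the boundary values $(\varsigma,\omega)$ on the two\--surface $\Sigma$ alone; it is a two\--dimensional variational problem whose minimizer is the extreme Kerr throat data, and it holds irrespective of how (or whether) the data extend off $\Sigma$. Reflecting the initial data across $\Sigma$ to produce a complete two\--ended data set and invoking the Dain mass\--angular momentum inequality there would give a lower bound on the ADM mass of the doubled data, which is a different functional and does not directly control $\mathcal M|_\Sigma$; moreover it would saddle you with regularity and decay verifications that the argument does not need. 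The inversion invariance is used in the paper only to secure the pointwise conditions \eqref{conditions} on $\Sigma$, not to set up a global variational problem. If you replace the doubling step by a direct citation of the quasi\--local lower bound for $\mathcal M$ from \cite{Acena11}, your proof matches the paper's.
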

\begin{proof}

In \cite{Acena11} it was proven that if the first two conditions in \eqref{conditions}, and \eqref{derh} are valid, then the right hand side of \eqref{aa1} is bounded from below by $8\pi |J|$: since the first two conditions are automatically satisfied for an inversion-fixed surface, we obtain the desired result.
\end{proof}

\textbf{Acknowledgments}. We want to thank Sergio Dain and Andr\'es Aceña for useful discussions.

\end{document}